\documentclass[12pt]{article}
\usepackage{lmodern}        

\usepackage[T1]{fontenc}
\usepackage[latin9]{inputenc}
\usepackage[top=1in, bottom=1in, left=1in, right=1in]{geometry}
\usepackage{verbatim}
\usepackage{float}
\usepackage{amssymb,amsfonts,amsmath}
\usepackage{enumerate}
\usepackage{enumitem}
\usepackage{amsthm}
\usepackage{threeparttable,hhline}
\usepackage{algorithm}
\usepackage{algpseudocode}
\usepackage{multirow, array}

\makeatletter
\renewcommand\subsubsection{\@startsection{subsubsection}{2}%
  \z@{.5\linespacing\@plus.7\linespacing}{-.5em}%
  {\normalfont\bfseries}}
\makeatother



\usepackage{color}
\usepackage{graphicx}

\usepackage{amsthm}

\usepackage{hyperref}
\usepackage{caption}
\usepackage{subcaption}
\floatstyle{ruled}
\newfloat{Figure}{tbp}{loa}
\floatname{Figure}{Figure}

\usepackage{microtype}

\DeclareGraphicsExtensions{.eps}

\newcommand\nc\newcommand
\nc\bfa{{\boldsymbol a}}\nc\bfA{{\boldsymbol A}}\nc\cA{{\mathcal A}}
\nc\bfb{{\boldsymbol b}}\nc\bfB{{\boldsymbol B}}\nc\cB{{\mathcal B}}
\nc\bfc{{\boldsymbol c}}\nc\bfC{{\boldsymbol C}}\nc\cC{{\mathcal C}}
\nc\sC{{\mathscr C}}
\nc\bfd{{\boldsymbol d}}\nc\bfD{{\boldsymbol D}}\nc\cD{{\mathcal D}}
\nc\bfe{{\boldsymbol e}}\nc\bfE{{\boldsymbol E}}\nc\cE{{\mathcal E}}
\nc\bff{{\boldsymbol f}}\nc\bfF{{\boldsymbol F}}\nc\cF{{\mathcal F}}
\nc\bfg{{\boldsymbol g}}\nc\bfG{{\boldsymbol G}}\nc\cG{{\mathcal G}}
\nc\bfh{{\boldsymbol h}}\nc\bfH{{\boldsymbol H}}\nc\cH{{\mathcal H}}
\nc\bfi{{\boldsymbol i}}\nc\bfI{{\boldsymbol I}}\nc\cI{{\mathcal I}}
\nc\bfj{{\boldsymbol j}}\nc\bfJ{{\boldsymbol J}}\nc\cJ{{\mathcal J}}
\nc\bfk{{\boldsymbol k}}\nc\bfK{{\boldsymbol K}}\nc\cK{{\mathcal K}}
\nc\bfl{{\boldsymbol l}}\nc\bfL{{\boldsymbol L}}\nc\cL{{\mathcal L}}
\nc\bfm{{\boldsymbol m}}\nc\bfM{{\boldsymbol M}}\nc\sM{{\mathscr M}}\nc\cM{{\mathcal M}}
\nc\bfn{{\boldsymbol n}}\nc\bfN{{\boldsymbol N}}\nc\cN{{\mathcal N}}
\nc\bfo{{\boldsymbol o}}\nc\bfO{{\boldsymbol O}}\nc\cO{{\mathcal O}}
\nc\bfp{{\boldsymbol p}}\nc\bfP{{\boldsymbol P}}\nc\cP{{\mathcal P}}
\nc\bfq{{\boldsymbol q}}\nc\bfQ{{\boldsymbol Q}}\nc\cQ{{\mathcal Q}}
\nc\bfr{{\boldsymbol r}}\nc\bfR{{\boldsymbol R}}\nc\cR{{\mathcal R}}
\nc\bfs{{\boldsymbol s}}\nc\bfS{{\boldsymbol S}}\nc\cS{{\mathcal S}}
\nc\bft{{\boldsymbol t}}\nc\bfT{{\boldsymbol T}}\nc\cT{{\mathcal T}}
\nc\bfu{{\boldsymbol u}}\nc\bfU{{\boldsymbol U}}\nc\cU{{\mathcal U}}
\nc\bfv{{\boldsymbol v}}\nc\bfV{{\boldsymbol V}}\nc\cV{{\mathcal V}}
\nc\bfw{{\boldsymbol w}}\nc\bfW{{\boldsymbol W}}\nc\cW{{\mathcal W}}
\nc\bfx{{\boldsymbol x}}\nc\bfX{{\boldsymbol X}}\nc\cX{{\mathcal X}}
\nc\bfy{{\boldsymbol y}}\nc\bfY{{\boldsymbol Y}}\nc\cY{{\mathcal Y}}
\nc\bfz{{\boldsymbol z}}\nc\bfZ{{\boldsymbol Z}}\nc\cZ{{\mathcal Z}}

\nc\diff{{\mathrm d}}
\nc\e{{\mathrm e}}
\nc\calC{{\mathcal C}}

\DeclareMathOperator{\rank}{rank}

\DeclareMathOperator{\wt}{wt}

\newcommand{\remove}[1]{}

\newcommand{\avg}{{\mathbb E}}

\newcommand{\dist}{d_\mathrm{H}}

\newtheorem{theorem}{Theorem}
\newtheorem{proposition}{Proposition}
\theoremstyle{definition}
\newtheorem{definition}{Definition}
\theoremstyle{corollaryn}

\theoremstyle{theorem-n}
\newtheorem*{theorem-n}{Theorem}
\newtheorem*{lemma-n}{lemma}

\newcommand{\capa}{{\sf Cap}}

\setlength{\belowdisplayskip}{0pt} \setlength{\belowdisplayshortskip}{0pt}
\setlength{\abovedisplayskip}{0pt} \setlength{\abovedisplayshortskip}{0pt}

\def\DEBUG{true}

\ifdefined\DEBUG

  \def\rem#1{{\marginpar{\raggedright\scriptsize #1}}}
  \newcommand{\barnr}[1]{\rem{\textcolor{red}{$\bullet$ #1}}}
  \newcommand{\aryar}[1]{\rem{\textcolor{green}{$\bullet$ #1}}}
\else

  \newcommand{\barnr}[1]{}
  \newcommand{\aryar}[1]{}

\fi

\newcommand\ff{{\mathbb F}}

\allowdisplaybreaks

\author{Arya Mazumdar}


\begin{document}
\sloppy

\title{Capacity of Locally Recoverable Codes\thanks{Halicioglu Data Science Institute, University of California, San Diego \url{arya@ucsd.edu}. This work is supported in part by an NSF awards CCF 2127929  and  CCF  1618512. Some parts of this paper was presented at IEEE Information Theory Workshop, 2018, as an invited paper.}}
\maketitle

\begin{abstract}
Motivated by applications in distributed storage, the notion of a locally recoverable code (LRC) was introduced a few years back. In an LRC, any coordinate of a codeword is 
recoverable by accessing only a small number of other coordinates. While different properties of LRCs have been well-studied, their performance on channels with random erasures or errors has been mostly unexplored. 
In this paper, we analyze the performance of LRCs over such stochastic channels. In particular, for input-symmetric discrete memoryless channels, we give a tight characterization
of the gap to Shannon capacity when LRCs are used over the channel. Our results hold for a general notion of LRCs that correct multiple local erasures.
\end{abstract}

\section{Introduction}

A code $\cC$, a collection of vectors,  is called locally recoverable (or repairable) with locality $r$, if content of any coordinate can be recovered by accessing only $r$ other coordinates \cite{gopalan2012locality,papailiopoulos2012locally}.
Locally recoverable codes have been the subject of  intense research, including constructions \cite{tamo2014family,barg2017locally,tamo2016optimal,prakash2012optimal}, bounds \cite{cadambe2015bounds,agarwal2018combinatorial,tamo2016bounds} and generalizations \cite{wang2014repair,rawat2015cooperative,rawat2014locality,mazumdar2015storage,mazumdar2019storage,tamo2014bounds,prakash2012optimal,karingula2022lower}.

Formally, a $q$-ary code $\cC$ of length $n,$ cardinality $M,$ and  distance $d$ is a set of $M$ length-$n$ vectors over an alphabet $Q, |Q|=q$, with minimum pairwise Hamming distance $d$. The quantity $k=\log_qM$ is called the dimension of $\cC,$ and $R = \frac1n \log_qM$ is called the rate of the code. If $Q$ is a finite field and
$\cC$ is a linear subspace of $Q^n$ then $k$ is the dimension of $\cC$ as a vector space. 
Below, $[n] \equiv \{1,\dots, n\}$, and for any $x \in Q^n$, $x_{i}$ is the projection of $x$ in the $i$th coordinate. By extension, for any $I \subseteq [n]$, $x_I$ is the projection of $x$ onto the coordinates of $I$.

\begin{definition}\label{def1}
  A code $\cC\subset Q^n$ is \emph{locally recoverable code} (LRC) with {\em locality} $r$
if every coordinate $i\in \{1,2,\dots,n\}$ is contained in a subset $\cR_i\subseteq[n]$ of size $r+1$ such that
 there is a function
 $\phi_i:Q^r\to Q$ with the property that for every codeword $c=(c_1, c_2, \dots, c_n)\in\cC$
   \begin{equation}\label{eq:def1}
   c_i=\phi_i(c_{j_1},\dots,c_{j_r}),
   \end{equation}
where $j_1 < j_2 < \cdots < j_r$ are the elements of $\cR_i\backslash\{i\}.$
We use the notation $(n,k,r)$ to refer to a code of length $n$, dimension $k$ and locality $r.$
\end{definition}

This definition has been extended
in \cite{prakash2012optimal}, which has been since widely used e.g.~\cite{cai2019optimal,wang2014repair,chen2017constructions}.
\begin{definition}\label{def2} A code $\cC\subset Q^n$ of cardinality $q^k$ is said to have the $(\rho,r)$ {\em locality property} (to be an $(n,k,r,\rho)$ LRC) where $\rho\geq 2$, 
if each coordinate $i\in [n]$ is contained in a subset $\cR_i\subset [n]$ of size at most $r+\rho -1$ 
such that the restriction $\cC_{\cR_i}$ of the code $\cC$ to the coordinates in $\cR_i$ forms a code of distance at least $\rho$. 
Notice that the values of any $\rho-1$ coordinates of $\cR_i$ are determined by  the values  of the  remaining $|\cR_i|-(\rho-1)\leq r$ coordinates, thus enabling local recovery.  $\cR_i$ is called the repair group of coordinate $i$.
\label{def:LRC}
\end{definition}

Over the past decade, many features of locally recoverable codes were examined, most notably  the minimum distance of LRC codes, e.g.~\cite{gopalan2012locality,cadambe2015bounds,tamo2016bounds}. However, optimal lengths and symbol-size were also well-studied~\cite{kolosov2018optimal,guruswami2019long,cai2020optimal}. Initially, research focused on codes with large alphabets, but interest has also shifted to studying binary and other small alphabet LRCs, both in terms of the bounds and constructions possible with the use of cyclic and algebraic properties of codes~\cite{huang2016binary,goparaju2014binary,jin2022binary,barg2017locallyr}. It has been suggested that binary codes are efficient for storage, and while the origins of LRCs can be traced back to distributed storage systems, they have also become a topic of theoretical interest on their own. Many recent studies have focused on developing bounds and constructions of codes with local repair properties, not only for their potential use in distributed storage, but also as an intellectual exercise.

Aside from constructing LRCs, much of the research in this field has focused on determining the optimal error-correction capability of LRCs. Typically, the error/erasure correction capability of a code is represented by its minimum distance, under the assumption of an adversarial error model. However, an arguably more common scenario is one in which errors and erasures occur randomly. Despite this, there has been relatively little research conducted in that direction.

In this paper, we investigate the maximum achievable rate of  locally repairable codes such that reliable transmission is possible over a discrete memoryless channel (DMC). Surprisingly, with the exception for \cite{mazumdar2014update}, no paper deals with this quite basic theoretical question. In \cite{mazumdar2014update} it was shown that for  a {\em binary erasure channel} (BEC) with erasure probability $p$ (Shannon capacity $1-p$), to achieve a rate of $1-p-\epsilon$, the locality must scale as $\Theta(\log \frac1\epsilon)$. While the constant within $\Theta(\cdot)$ is not clear, the method therein also does not extend to 
{\em binary symmetric channel} (BSC) or other binary-input memoryless channels or the generalized notion of LRCs. 

In this work, we perform a finer and through analysis of the gap to capacity for LRCs. For a discrete memoryless channel given by a input-output stochastic transition matrix\footnote{We sometime also refer to a DMC by $X \to Y$ to describe the input-output random variables.} $W$, let $\capa(W)$ be the Shannon capacity of the channel, and $\capa(W,r)$ to be the capacity of the channel where we are constrained to use only a locally repairable code with locality $r$, and $\rho=2$ (Definition~\ref{def1}). Let us define,
$$
{\rm Gap}(W,r) \equiv \capa(W) - \capa(W,r).
$$
An impossibility result in this regard gives a lower bound on the gap, while an achievability scheme gives an upper bound on the gap. Our results for LRCs with parameter $r$ (Definition~\ref{def1}) are summarized in Table~\ref{tab:results}. Here, $h(x) \equiv -x\log_2 x -(1-x)\log_2(1-x)$ is the binary entropy function. While the results hold for binary-input channels, it is not difficult to extend the for the $q$-ary case. For the BEC and BSC, the results are also plotted in 
Fig.~\ref{fig:cap} for $r=2$. Note that, we are able to exactly calculate the capacity for BEC, while we have tight upper and lower bounds for BSC.

\begin{center}
\begin{threeparttable}[!t]
\caption{The gap to capacities of LRCs over binary-input symmetric DMCs\label{tab:results} for $\rho=2$}
\label{results_table}
\begin{centering}
\begin{tabular}{|m{30mm}|c|c|}
\hline
 \centering \textbf{Channel} & \textbf{Lower Bound} on ${\rm Gap}(W,r)$ &  \textbf{Upper Bound} on ${\rm Gap}(W,r)$\\
 \hhline{|=|=|=|}
\centering BEC($p$) & $\frac{(1-p)^{r+1}}{r+1}$ & $\frac{(1-p)^{r+1}}{r+1}^\ast$  \\
 \hline
 \centering BSC($p$) & $\frac{(1-h(p))^{r+1}}{r+1}$ & $\frac1{r+1}\Big(1-h\Big(\frac{1-(1-2p)^{r+1}}{2}\Big)\Big)^{\dagger}$  \\
 \hline
\centering General $W$ & $ \frac{\capa(W)^{r+1}}{r+1}$ & $\frac1{r+1}\Big(1-h\Big(\frac{1-(1-2h^{-1}(1-\capa(W)))^{r+1}}{2}\Big)\Big)$  \\
\hline
\end{tabular}
\begin{tablenotes}
\footnotesize
\item $^\ast$ also achievable by linear codes.
\item $^\dagger$ we conjecture this bound to be tight. 
\end{tablenotes}
\end{centering}
\end{threeparttable}
\end{center}

It is of interest to compare the results of Table~\ref{tab:results} with the results on rate vs. minimum distance trade-off for LRCs from existing literature. Note that, even in standard error-correcting codes, and in asymptotics, the upper and lower bounds on the rate vs. minimum distance trade-off do not match. This is reflected in corresponding bounds for LRCs too~\cite{cadambe2015bounds}. Moreover, for LRCs these bounds can be quite cumbersome~\cite{agarwal2018combinatorial}. One insightful (though sub-optimal) bound maybe the {\em Gilbert-Varshamov}-type bound presented in \cite[Eq.~(12)]{cadambe2015bounds}, which says that there exists a family of $(n,Rn,r,2)$-LRC  with minimum distance $\delta n,$ such that
$$
R \ge 1-h(\delta) - \frac{\log_2(1+(1-2\delta)^{r+1})}{r+1}.
$$
The ``gap'' from the actual Gilbert-Varshamov bound here is $\frac{1}{r+1}\log_2(1+(1-2\delta)^{r+1}),$ which bears similarity with the analogous terms of Table~\ref{tab:results}.

To prove the lower  and upper bounds for BEC we rely on simple information inequalities and random coding methods. The main idea behind the converse bound is that if a coordinate of a codeword and its repair group are both unerased then there is redundant information. It is difficult to extend the converse bounding arguments to other channels. However in some sense BEC is the `best' channel among all binary input memoryless symmetric channels \cite{korner1977comparison}. We can use that fact to lower bound the gap to capacity for more general channels including BSC. A random coding method for BSC also gives the upper bound on gap to capacity for any binary input channels by the same argument, as BSC is the `worst' among all in the same sense. This notion of `best' and `worst' channels are formalized and discussed in detail later.


\begin{figure}
	\centering
	\begin{subfigure}{0.45\textwidth} 
		\includegraphics[width=\textwidth]{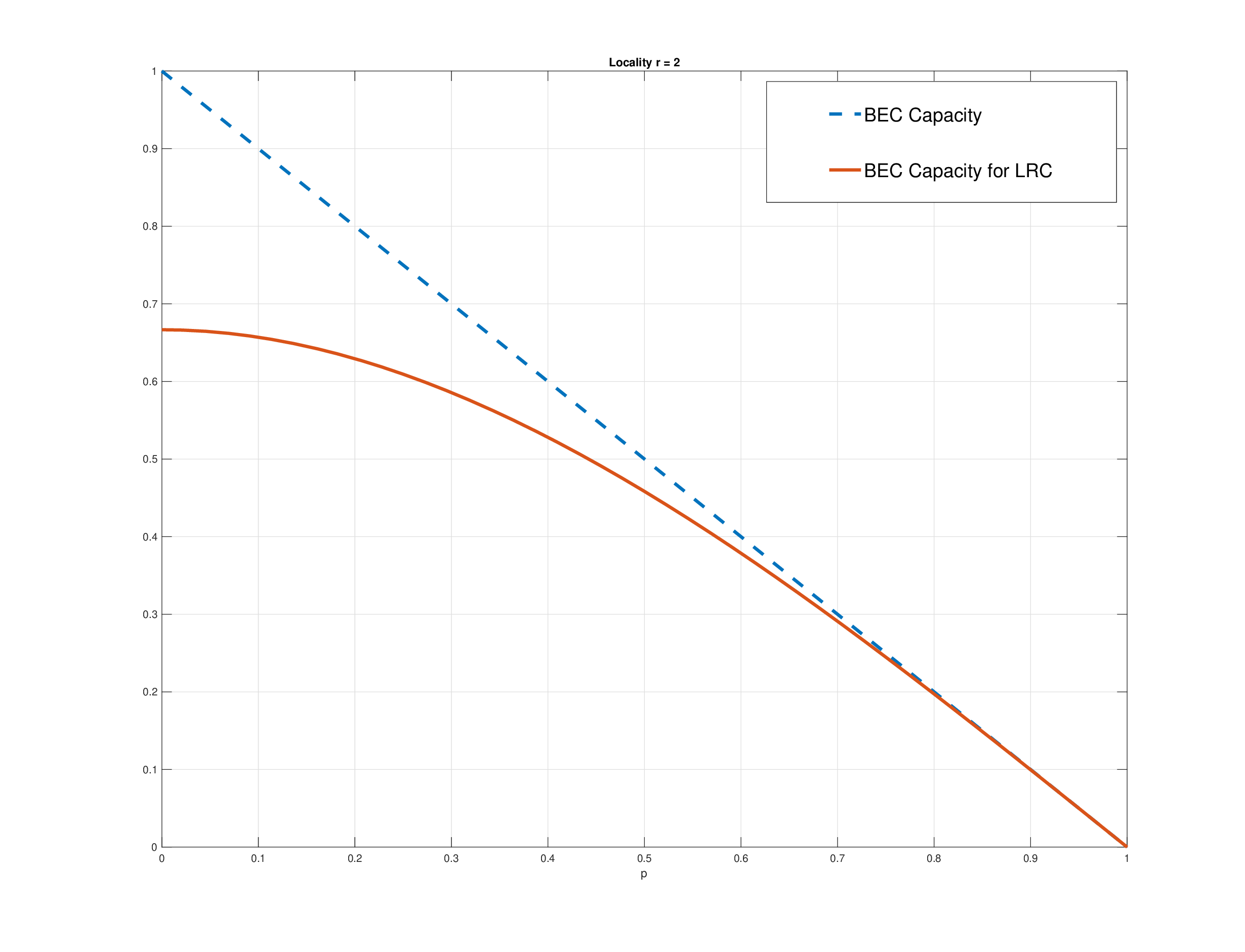}
	\end{subfigure}
	\vspace{1em} 
	\begin{subfigure}{0.45\textwidth} 
		\includegraphics[width=\textwidth]{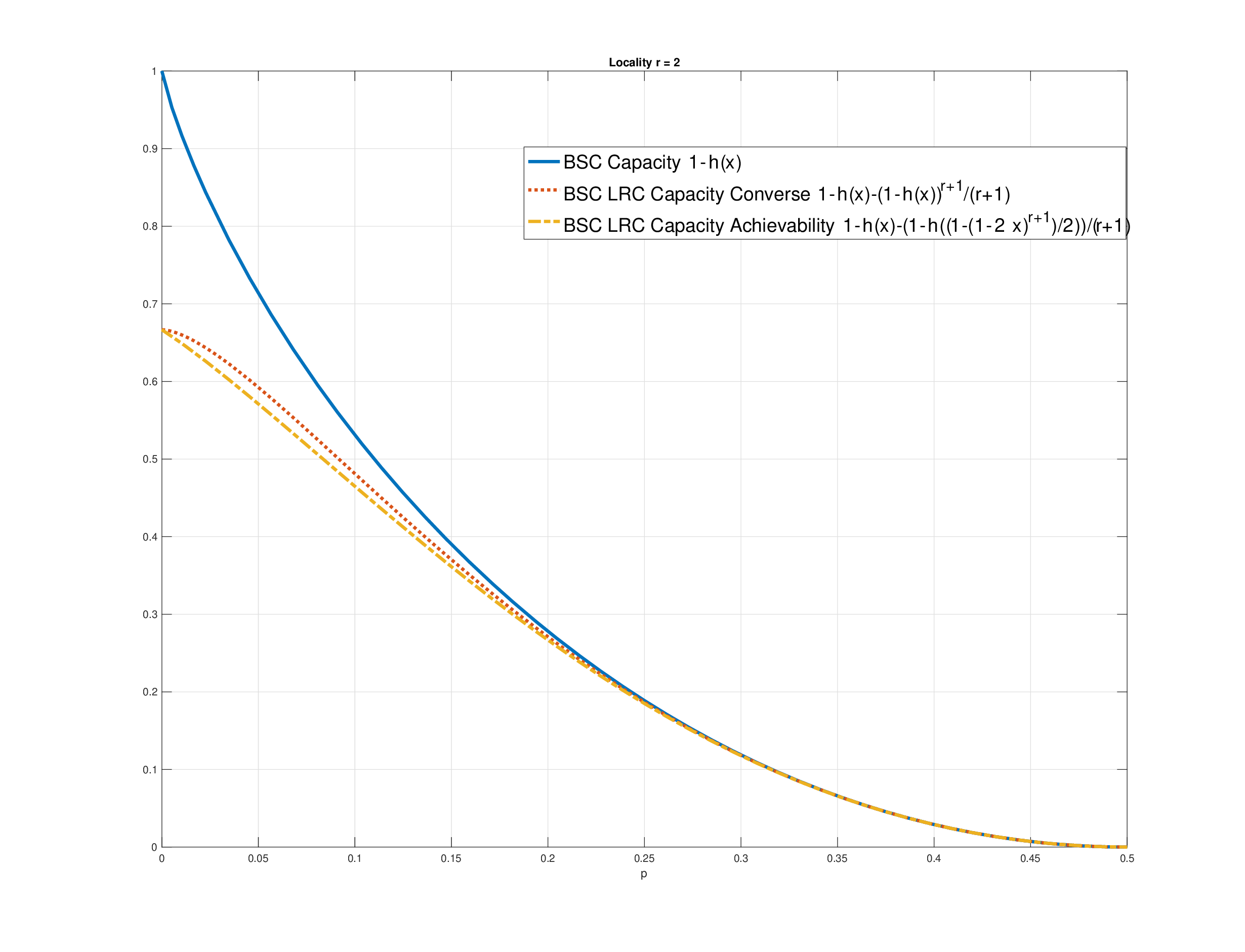}
	\end{subfigure}
	\caption{Capacities of Locally Recoverable Codes with locality $r=2$ (and $\rho=2$) over BEC and BSC. \label{fig:cap}} 
\end{figure}

We also analyze the capacity of LRCs with the more general definition (cf. Definition~\ref{def2}).
While the information theoretic methods for converse bound extends for this general case, the random coding bounds for achievability becomes trickier, and for $\rho>2$, the upper and lower bounds, even for erasure channel, starts to diverge.  Nonetheless, for $\rho=3$ we have a tight closed form expressions for both the bounds. 
The main idea for these achievability results is to use code concatenation~\cite{forney1966concatenated}. We use a small (constant) length code with a prescribed minimum distance as an inner code of the concatenated code construction. This guarantees the local repair property. The outer code is a random code. The achievable rate of this construction that guarantees a vanishing probability of error is proportional to the mutual information $I(X;Y)$ where $X$ is chosen uniformly at random from the inner code, and $Y$ is the output of the channel when $X$ is the input. For BEC, this quantity  can be concisely expressed by the  so-called rank (Tutte) polynomial of the local codes formed by the repair groups, and for BSC it depends on the coset weight distributions of the local codes. 
We provided pointers to these results in Table~\ref{tab:thms}.

\begin{table}[htbp]
\begin{center}
\caption{ \label{tab:thms}Pointers to results for capacity of $(n,k,r,\rho)$ LRC, general $\rho$}
\begin{tabular}{ |c|c|c| }
\hline
{\bf Channel} & {\bf Achievability} & {\bf Converse}\\
 \hline\hline
 BEC($p$) & Theorem~\ref{thm:main3} & Theorem~\ref{thm:main2} \\ 
 \hline
 BSC($p$) & Theorem~\ref{thm:bsc3} & Theorem~\ref{thm:bsc2} \\ 
 \hline
\end{tabular}
\end{center}
\end{table}
The paper is organized as follows. In Section~\ref{sec:con}, we describe some concepts and definitions that are going to be used in the rest of the paper. 
Sections \ref{sec:bec} and  \ref{sec:bsc} deal with the binary erasure and binary symmetric channels respectively, while Sec.~\ref{sec:all} deals with other binary input channels.  

\section{Some Coding Theoretic Concepts}\label{sec:con}
In this paper,  we consider only binary codes. A code of length $n$ is usually denoted by $\cC \in \{0,1\}^n$.  
Let $A(n,d)$ be the maximum possible size of a code $\cC \subseteq \{0,1\}^n$ of minimum distance $d$. Let the least possible redundancy in a code of length $n$ and distance $d$ be denoted by $\mu(n,d) \equiv n - \log_2 A(n,d)$.

We will use the standard information theoretic notion of channel capacity. Let $\cC$ be a code and $X_1^n \equiv (X_1,\dots, X_n)$ be a randomly and uniformly chosen codeword (we write $X_1^n \sim {\rm Unif}(\cC)$). Let $Y_1^n \in \cY^n$ be the output of the discrete memoryless channel $W$ when $X_1^n$ is the input. Suppose $f: \cY^n \to \cC$ be a decoding algorithm for $\cC$. The average probability of error is defined to be:
$$
P(\cC) = \avg_{X_1^n \sim {\rm Unif}(\cC)} \Pr(f(Y_1^n) \ne X_1^n).
$$
The capacity of the binary input discrete-memoryless channel $W$ is defined to be,
$$
\capa(W) = \inf_{\epsilon >0}\limsup_{n \to \infty} \max_{\cC \subseteq \{0,1\}^n : \exists f {\rm s.t.} P(\cC) < \epsilon} \frac{\log_2 |C|}{n}
$$
The capacity of LRCs is defined in the similar way:
$$
\capa(W,r, \rho) = \inf_{\epsilon >0}\limsup_{n \to \infty} \max_{\substack{\cC \subseteq \{0,1\}^n : \exists f {\rm s.t.} P(\cC) < \epsilon\\ \cC \text{ is } (n,k,r, \rho) \text{ LRC}}} \frac{k}{n}
$$

For BEC and BSC with parameter $p$, the capacities are $1-p$ and $1-h(p)$ respectively. We use some shorthands. For BEC and BSC with parameter $p$, we write the respective LRC capacities as $\capa_{\rm BEC}(p,r,\rho)$ and  $\capa_{\rm BSC}(p,r,\rho)$.

We will also be needing the notion of the {\em rank polynomial} of a linear code for our results. Suppose $\cC$ is a linear  code of length $n$ and dimension $k$. This means $\cC$ has a $k \times n$ generator matrix $G$ of rank $k$. For a subset of indices $I \subseteq \{1,2,\dots, n\}$,  let $G_I$ denote the $k\times |I|$ submatrix of $G$ that contains the columns with indices in $I$. Let $E(u,v)$ be the number of $k \times u$ submatrices of $G$ with rank $v$, i.e.,
$$
E(u,v) = |\{I: |I| = u, \text{ rank of } G_I =v\}|.
$$
The {\em rank polynomial} of the code $\cC$ is defined by~\cite{barg2002some},
$$
U(x,y) = \sum_{u=0}^n \sum_{v=0}^k E(u,v) x^u y^v.
$$
As described in \cite{barg2002some}, this polynomial is closely related to the Tutte polynomial of the vector matroid of the code, and as such satisfies some nice properties.

Since a linear code $\cC\subseteq \{0,1\}^n=\ff_2^n$ is a subgroup of the additive group of the vector space, the translates or cosets of the code partitions $\ff_2^n$ and are of equal size. There are $2^n/|\cC|$ cosets, denoted by $\cC^{i}, i=0, 2, \dots, 2^n/|\cC|-1$. Let  $A^{(i)}_w$ be the number of vectors of Hamming weight $w$ in the coset $\cC^{i}$, $i=0,1,  2, \ldots, \frac{2^{n}}{|\cC|}-1$.  The {\em coset weight enumerator} of the code is defined by:
$$
A^{(i)}(x,y) = \sum_{w=0}^{n} A^{(i)}_w  x^{n-w} y^w.
$$

Last, but not the least, we need the concept of ``more capable channels''. All the channels below are discrete memoryless channels.

\begin{definition}
A  channel $X \to Y$ is said to be more capable than another channel $X\to Z$ if for any input distribution on $X$, $$I(X;Y) \ge I(X;Z).$$ 
\end{definition}

It is known that among the binary-input symmetric discrete memoryless channels of same capacity
BSC is the least capable  and BEC is the most capable~\cite{geng2013broadcast}. The following can be derived from~\cite{korner1977comparison}. This result also follows from \cite[ex. 16, p.~116]{csiszar1982}.
\begin{proposition}\label{prop:cap}
Suppose the channel  $X \to Y$ is more capable than the channel $X\to Z$, and a code $\cC$ of rate  $R$ achieves a probability of error $\epsilon$ over the channel  $X \to Z$. Then there exists a code $\cC' \subseteq \cC$ of rate $R -\delta$ that achieves a probability of error $\epsilon'$ over $X\to Y$, where $\delta, \epsilon' \to 0$ as $\epsilon \to 0$.
\end{proposition}

\section{LRC Capacity of the Binary Erasure Channel}\label{sec:bec}

Our first result concerns the LRC capacity for the special case of $\rho=2$, i.e., the usual LRCs. In this case we can exactly compute the capacity of LRCs on the BEC.
\begin{theorem}\label{thm:main1}
The capacity of LRC with locality  $r$ over BEC($p$) is given by:
$$
\capa_{\rm BEC}(p,r,2) = 1- p -\frac{(1-p)^{r+1}}{r+1}.
$$
\end{theorem}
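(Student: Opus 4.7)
For the converse, my plan is to use Fano's inequality to reduce to bounding mutual information: any sequence of LRC codes $\{\cC_n\}$ of locality $r$ and rate $R$ with vanishing block-error probability over BEC($p$) must satisfy $R \le I(X;Y)/n + o(1)$ for $X$ uniform on $\cC_n$. A standard identity for the BEC gives $I(X;Y) = \expect_E[H(X_{\bar E})]$, where $E \subseteq [n]$ is the random erasure set (each coordinate erased independently with probability $p$) and $\bar E = [n] \setminus E$. It therefore suffices to show $\expect_E[H(X_{\bar E})] \le n[(1-p) - (1-p)^{r+1}/(r+1)]$ for every LRC of locality $r$. The key step is a random-ordering trick: draw a uniformly random permutation $\sigma$ of $[n]$, independent of $E$, and expand $H(X_{\bar E})$ by the chain rule in the order $\sigma(1), \sigma(2), \dots$. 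Each term $1_{\sigma(j) \in \bar E}\, H(X_{\sigma(j)} \mid X_{\sigma([j-1]) \cap \bar E})$ is at most $1$ trivially, and by the LRC identity $H(X_i \mid X_{\cR_i \setminus \{i\}}) = 0$ it vanishes whenever $\cR_{\sigma(j)} \setminus \{\sigma(j)\} \subseteq \sigma([j-1]) \cap \bar E$, i.e., whenever $\cR_{\sigma(j)} \subseteq \bar E$ and $\sigma(j)$ is the last element of its repair group to appear in $\sigma$. Averaging jointly over $\sigma$ and $E$, for each coordinate $i$ this vanishing event has probability exactly $(1-p)^{r+1} \cdot \frac{1}{r+1}$, and summing over $i$ yields savings of $n(1-p)^{r+1}/(r+1)$ from the trivial bound $\expect|\bar E| = n(1-p)$.

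For the achievability, I would use the single-parity-check product code $\cC_{\rm SPC} = [\mathrm{SPC}(r+1)]^{n/(r+1)}$, whose codewords consist of $n/(r+1)$ blocks of $r+1$ bits each summing to $0$. Every coordinate is determined by the other $r$ bits in its block, so $\cC_{\rm SPC}$ and every subcode of it has locality $r$. For $X$ uniform on $\cC_{\rm SPC}$, the mutual information over BEC($p$) decomposes blockwise and each block of size $r+1$ contributes $\expect[\min(\mathrm{Bin}(r+1, 1-p), r)] = (r+1)(1-p) - (1-p)^{r+1}$, giving $I(X;Y)/n = (1-p) - (1-p)^{r+1}/(r+1)$. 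A standard random-coding argument---in fact, picking a uniformly random linear sub-code of $\cC_{\rm SPC}$ of the appropriate dimension---then produces an LRC of rate $1 - p - (1-p)^{r+1}/(r+1) - \eps$ with vanishing error probability, matching the converse and also verifying the ``$\ast$'' footnote in Table~\ref{tab:results}.

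The main subtlety I anticipate is the symmetrization step in the converse: it is what delivers the \emph{exact} constant $1/(r+1)$ without any assumption on the structure of the repair groups, in particular without needing $n/(r+1)$ disjoint repair groups (which need not exist for a general LRC). Once this averaging step is in place, the rest of the proof reduces to routine information-theoretic computations on the BEC.
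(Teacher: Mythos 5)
Your proposal is correct, and its overall architecture (Fano plus entropy savings from unerased repair groups for the converse; the single-parity-check product code plus random coding for achievability) is the same as the paper's. The interesting differences are in how the two key steps are executed. For the converse, the paper asserts the pointwise bound $H(Z_1^n \mid I=u) \le n - |u| - L_u/(r+1)$, where $L_u$ counts coordinates whose entire repair group is unerased; justifying the $1/(r+1)$ there for possibly overlapping repair groups requires a greedy packing argument that the paper leaves implicit. Your random-ordering chain rule replaces that with an averaging argument: the event $\{\cR_i \subseteq \bar E\} \cap \{i \text{ last in } \cR_i \text{ under } \sigma\}$ has probability exactly $(1-p)^{r+1}/(r+1)$, which delivers the same expected savings with no structural assumption and no case analysis --- in effect you prove (the expectation of) the paper's asserted inequality rather than assuming it. For achievability, the paper computes $H(Y_1^{r+1})$ of the super-symbol channel output explicitly and simplifies; your identity $I(X_1^{r+1};Y_1^{r+1}) = \expect[\min(\mathrm{Bin}(r+1,1-p),r)] = (r+1)(1-p) - (1-p)^{r+1}$, using that the projection of the parity-check code onto any $\le r$ coordinates is uniform, reaches the same number with far less computation, and your random linear subcode step coincides with the paper's Proposition (the rank-of-a-random-submatrix argument), confirming the ``$\ast$'' footnote. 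Both of your shortcuts are sound; the only thing worth writing out in full is the step $I(X;Y)=\expect_E[H(X_{\bar E})]$ and the verification that the vanishing of the conditional entropy term uses only the deterministic recovery functions $\phi_i$ from Definition~\ref{def1}.
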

This theorem can be proved as a corollary to the following converse and achievability results.
\begin{theorem}\label{thm:main2}
For generalized LRCs, the following holds: 
$$
\capa_{\rm BEC}(p,r, \rho) \le 1- p -\frac{(1-p)^{r}}{r+\rho-1} \sum_{t=0}^{\mu(r+\rho-1, \rho)-1}   (\mu(r+\rho-1, \rho)-t)  \binom{r+\rho-1}{t} p^t (1-p)^{\rho-1 -t} .
$$
\end{theorem}
Plugging in $\rho=2$ in the above theorem, and noting that $\mu(r+1, 2) =r+1-r=1$, we obtain
 $$
\capa_{\rm BEC}(p,r, 2) \le 1- p -\frac{(1-p)^{r+1}}{r+1},
$$
which proves the converse bound for Theorem \ref{thm:main1}.

The achievability result follows next.
\begin{theorem}\label{thm:main3}
Suppose there exists a linear code $\cA$ of length $r+\rho-1$ and minimum distance $\rho$. Let $U(x,y)$ be the rank polynomial of $\cA$. Then 
$$
\capa_{\rm BEC}(p,r, \rho) \ge \frac{p^{r+\rho-1}}{r+\rho-1}\left. \frac{\partial}{\partial y} U\Big(\frac{1-p}{p},y\Big) \right|_{y=1}.
$$
\end{theorem}
Theorems \ref{thm:main2} and \ref{thm:main3} are proved later in this section.
Note that, the single parity check code of length $r+1$ has distance $2$. Plugging $\rho =2$ in the above result, Theorem \ref{thm:main3}, we should obtain a lower bound on $\capa_{\rm BEC}(p,r, 2)$. But for this we have to obtain the rank polynomial of the parity check code.
First of all, note that, for the single parity check code of length $r+1$,
\begin{align*}
E(u,v) = \begin{cases} 1 & u= r+1, v=r \\
\binom{r+1}{u} & u \le r, v=u \\
0 & \text{ otherwise }.
\end{cases}
\end{align*}
Therefore, 
\begin{align*}
U(x,y) & = \sum_{u=0}^{r+1} \sum_{v=0}^r E(u,v) x^u y^v \\
& = \sum_{u=0}^{r}  \binom{r+1}{u} x^u y^u +  x^{r+1}y^r \\
& = (1+xy)^{r+1} - (xy)^{r+1} +x^{r+1}y^r .
\end{align*}
Differentiating, we find
\begin{align*}
\left.\frac{\partial U(x,y)}{\partial y}\right|_{y=1} &= (r+1)x(1+x)^r - (r+1)x^{r+1} + rx^{r+1}\\
&=(r+1)x(1+x)^r - x^{r+1}.
\end{align*}
Using the theorem above,
$$
\capa_{\rm BEC}(p,r, 2)  \ge p^{r+1} \frac{1-p}{p}\frac{1}{p^r} -\frac{(1-p)^{r+1}}{r+1}= 1- p -\frac{(1-p)^{r+1}}{r+1},
$$
which proves the achievability part of Theorem~\ref{thm:main1}.
It turns out that the achievability result still holds when the code in question is restricted to be linear.
\begin{theorem}
For any $\varepsilon>0$, there exists a family of linear $(n, Rn, r)$ LRC codes with rate 
$$
R \ge 1- p -\frac{(1-p)^{r+1}}{r+1} - \varepsilon,
$$
that when used over a BEC($p$) results in a probability of error that goes to $0$ with $n$.
\end{theorem}
\begin{proof}
To see this, randomly choose a $k\times n$ generator matrix in the following way. Partition the set of $n$ coordinates into $\frac{n}{r+1}$ groups of size $r+1$ each. For each group chose $r$ columns randomly and uniformly from $\{0,1\}^k$. The $r+1$st column of each group is just the coordinate-wise modulo-2 sum of all the other $r$ columns of the group. If $n$ is not divisible by $r+1$, then neglect the remainder $\leq r$ coordinate, i.e., repeat the same symbol (0 or 1) in those coordinates. Since this will not lead to an asymptotic reduction in rate, let us assume that $r+1$ divides $n$.

This random generator matrix defined a random ensemble of locally repairable codes. Let $P(\cC)$ defines the probability of error of using code $\cC$ over BEC($p$). Note that, there will be an error in decoding only when the coordinates not erased by the channel has rank (over $\ff_2$) strictly less than $k$. If we can show that $\avg_\cC P(\cC) \to 0$, i.e., the average (over all linear codes in the ensemble) probability of error goes to 0,  then there must exist codes for which the probability of error goes to 0.

Let $I\subseteq \{1, \dots, n\}$ denote the set of non-erased coordinates, and $G_I$ denote the submatrix of $G$ with only columns indexed by $I$.  Further let $Z_I$ be the number of groups from where all the $r+1$ elements are not erased. We have, for any $\epsilon >0$,
\begin{align*}
 P(\cC) &=   \sum_{u \subseteq \{1, \dots, n\}} P_{\rm BEC}(I =u, Z_I=z)  \mathrm{1}(\rank(G_I) < k \mid I =u, Z_I=z)\\
& \le    \sum_{\substack{u \subseteq \{1, \dots, n\}\\ |u| \ge n(1-p-\epsilon)\\ z \le n(\frac{(1-p)^{r+1}}{r+1}+\epsilon)} } P_{\rm BEC}(I =u, Z_I=z)  \mathrm{1}(\rank(G_I) < k \mid I =u, Z_I=z) \\
&\qquad + P_{\rm BEC}( |I| < n(1-p-\epsilon)) + P_{\rm BEC}(Z_I > n(\frac{(1-p)^{r+1}}{r+1}+\epsilon)).
\end{align*}
Note that, the last two terms of the above expression goes to $0$ exponentially with $n$ by simple application of Chernoff bound. Therefore, 
\begin{align*}
\avg_{\cC} P(\cC) &\leq        \sum_{\substack{u \subseteq \{1, \dots, n\}\\ |u| \ge n(1-p-\epsilon)\\ z \le n(\frac{(1-p)^{r+1}}{r+1}+\epsilon)} } P_{\rm BEC}(I =u, Z_I=z)  P_\cC(\rank(G_I) < k \mid I =u, Z_I=z) +o(1).
\end{align*}
The term $P_\cC(\rank(G_I) < k \mid I =u, Z_I=z)$ is simply the probability that a $k \times (|u|-z)$ random binary matrix has rank less than $k$. This probability is at most $2^{-(|u|-z -k)}$, see ~\cite[Ex.~3.21]{richardson2008modern}. Therefore,
\begin{align*}
\avg_{\cC} P(\cC) &\leq   \sum_{\substack{w \ge n(1-p-\epsilon)\\ z \le n(\frac{(1-p)^{r+1}}{r+1}+\epsilon)}}  P_{\rm BEC}(|I| =w, Z_I=z) 2^{-(w-z -k)}   +o(1),
\end{align*}
which will go to 0 exponentially with $n,$ as long as, for any $\epsilon >0$, 
$$
n(1-p - \epsilon) - n(\frac{(1-p)^{r+1}}{r+1}+\epsilon) \ge k+ \epsilon n.
$$
Rearranging the above gives the statement of this theorem. 
%
\end{proof}

In the remainder of this section we prove  Theorems~\ref{thm:main2} and \ref{thm:main3}.
\subsection{Converse Bound: Proof of Theorem~\ref{thm:main2}}
When a codeword is passed through the BEC, the non-erased coordinates must identify the sent codeword. Hence intuitively it is possible to send $n(1-p)$ bits worth information with a code of length $n$. However, when the code is also an LRC, the non-erased coordinates must contain redundant information. Indeed, if a coordinate and its repair groups are both intact, then the redundant informations in the repair group must be subtracted from to get the optimal code rate. The proof below formalizes this intuition. 
\begin{proof}[Proof of Theorem~\ref{thm:main2}]
Assume that a code $\cC, |\cC| = 2^{nR}$ is used over BEC. The random codeword $(X_1, X_2, \dots, X_n) \equiv X_1^n$ was sent over the channel. The received vector is $Z_1^n$. Let $I \subseteq \{1,2,\dots,n\}$ denote the erased coordinates.

Using Fano's inequality, the probability of error is given by,
$$
P(\cC) \ge \frac{H(X_1^n \mid Z_1^n)-1}{\log |\cC|}.
$$

Now, note that $H( I \mid X_1^n) = H(I)$. Therefore,
\begin{align*}
H(X_1^n \mid Z_1^n) & = H(X_1^n \mid Z_1^n, I) \\
& = H(X_1^n , Z_1^n, I ) - H(Z_1^n, I)\\
& = H(X_1^n) + H(Z_1^n, I \mid X_1^n)  - H(I) - H(Z_1^n \mid I)\\
& = H(X_1^n) + H( I \mid X_1^n) + H(Z_1^n \mid I,  X_1^n) - H(I) - H(Z_1^n \mid I)\\
& = H(X_1^n)  + H(Z_1^n \mid I,  X_1^n)  - H(Z_1^n \mid I) \\
& =  H(X_1^n) +0  - H(Z_1^n \mid I)\\
& =  \log|\cC|  - H(Z_1^n \mid I).
\end{align*}
This implies, 
$$
P(\cC) \ge  1 - \frac{H(Z_1^n \mid I)+1}{\log|\cC|}.
$$
Let $J(t)$ be the number of coordinates whose corresponding repair groups have $t$ coordinates erased within them. Every repair group has size at most $r+\rho-1$, which must form a code of length at most $r+\rho-1$ and distance $\rho$. Therefore the number of redundant bits within a repair group is at least $\mu(r+\rho-1, \rho) = r+\rho-1 -\log_2{A(r+\rho-1,\rho)}.$ Within a repair group, even if $t \le \rho-1$ coordinates are erased, they can be recovered by the rest of the coordinates. 
Therefore, for a repair group with $t< \rho-1$ coordinates erased, at least $\mu(r+\rho-1, \rho)-t$ redundant coordinates remain.

Which means, 
$$
H(Z_1^n \mid I) \le \avg_{\rm BEC}  \big(n - |I| -\frac{L_I}{r+\rho -1}\big), 
$$
where,  the subscript BEC denote that the average is with respect to the randomness in BEC, and
$$
L_I = \sum_{t=0}^{\mu(r+\rho-1, \rho)-1} J(t) (\mu(r+\rho-1, \rho) -t).
$$

Hence,
\begin{align*}
H(Z_1^n \mid I) & \le n - \avg_{\rm BEC} |I| - \frac{1}{r+\rho-1} \avg_{\rm BEC} L_I \\
& = n - np - \frac{1}{r+\rho -1} \avg_{\rm BEC} L_I \\
& =  n - np - \frac{1}{r+\rho -1}  \sum_{t=0}^{\mu(r+\rho-1, \rho)-1}   (\mu(r+\rho-1, \rho) -t) \avg_{\rm BEC} J(t).
\end{align*}

Let us now derive $\avg_{\rm BEC} J(t).$ Let $\chi_i$ be the indicator random variable that denotes that the repair group of $i$th coordinate has $t$ coordinates erased. 
We have 
$$
\Pr(\chi_i =1) = \binom{r+\rho-1}{t} p^t (1-p)^{r+\rho-1 -t} .
$$
Therefore, 
\begin{align*}
\avg_{\rm BEC} J(t)= n \binom{r+\rho-1}{t} p^t (1-p)^{r+\rho-1 -t} .
\end{align*}

Plugging this in, we have, 
$$
H(Z_1^n \mid I)  \le n - np - \frac{n}{r+\rho -1}  \sum_{t=0}^{\mu(r+\rho-1, \rho)-1}   (\mu(r+\rho-1, \rho) -t)  \binom{r+\rho-1}{t} p^t (1-p)^{r+\rho-1 -t} .
$$

 Therefore,
$$
P(\cC) \ge 1- \frac{1- p -\frac{(1-p)^{r}}{r+\rho-1} \sum_{t=0}^{\mu(r+\rho-1, \rho)-1}  (\mu(r+\rho-1, \rho) -t)  \binom{r+\rho-1}{t} p^t (1-p)^{\rho-1 -t} }{R}- \frac{1}{nR}.
$$
To achieve vanishing probability of error, one must have,
$$
R \le 1- p -\frac{(1-p)^{r}}{r+\rho-1} \sum_{t=0}^{\mu(r+\rho-1, \rho)-1}   (\mu(r+\rho-1, \rho) -t)  \binom{r+\rho-1}{t} p^t (1-p)^{\rho-1 -t} .
$$

\end{proof}

\paragraph{Example.} Consider $\rho =3$ in Theorem~\ref{thm:main2}. From the sphere packing bound (or the Hamming bound), 
$$
A(r+2, 3) \le \frac{2^{r+2}}{r+3},
$$
which is achieved by the Hamming codes, when they exist with the parameters.
Therefore, $\mu(r+2, 3) \ge \log_2(r+3)$. Suppose, $r=5$ (since a Hamming code exists with length $7$ and distance $3$).
Then,
\begin{equation}
\capa_{\rm BEC}(p, r=5, \rho=3) \le 1-p - \frac{(1-p)^5}{7}(3+8p+10p^2).
\end{equation}

\subsection{Achievability: Proof of Theorem~\ref{thm:main3}}
We show our achievability result by devising  a random code. The key idea is to use  codewords of a small local code as the repair groups, by considering them as symbols of some larger alphabet code. The formal proof is below.
\begin{proof}[Proof of Theorem~\ref{thm:main3}]
We will show this  by constructing a code. Let $\Delta \equiv r+\rho-1$. Partition the set of $n$ coordinates into $\frac{n}{\Delta}$ groups of size $\Delta$ each. We assume that $\Delta$ divides $n$. However, this assumption is not necessary, as we can neglect the last $< \Delta$ remainder coordinates, with only $<\frac{\Delta}{n} \to 0$ reduction in rate.  Now, consider the $\Delta$ bits of a group as a super-symbol. Consider the input-output channel induced by these super-symbols instead of the BEC. We find the capacity of this super-channel, and then normalize by $\Delta$.

To construct a code with $(\rho,r)$ locality we first choose a fixed  code $\cA$ of length $\Delta$ and distance $\rho$. Next we construct a random code $\cC$ of length $n$. A codeword $c = (c_1 | c_2 | \ldots |c_{\frac{n}{\Delta}})$ of $\cC$ is formed by concatenating $\frac{n}{\Delta}$ randomly and uniformly chosen codewords of $\cA$ side-by-side. 
This code can be thought of as a concatenated code~\cite{forney1966concatenated}, with $\cA$ as the inner code and a random code of length $\frac{n}{\Delta}$ and alphabet size $|\cA|$ as the outer code.
We can think of this random code being used over a discrete memoryless channel over the larger alphabet of super-symbols. For decoding, we employ a joint-typicality decoder that considers the each block of $\Delta$ bits as a super-symbol over an alphabet of size $2^\Delta$.
It is known that the rate of a random code,
such that the probability of error goes to zero is given by $I(X_{1}^{\Delta}; Y_{1}^{\Delta})$ (sufficient condition), where $X_{1}^{\Delta}$ is a randomly and uniformly chosen codeword of $\cA$ and $Y_{1}^{\Delta}$ is the output of a BEC with flip probability $p$ when the input to the BEC is  $X_{1}^{\Delta}$. However this rate of information is achieved by $\Delta$ uses of the binary-input channel. Therefore, the rate of the concatenated code $\cC$ that results in vanishing probability of error is: 
$$
\frac{1}{\Delta} I(X_{1}^{\Delta}; Y_{1}^{\Delta}),
$$
with $X_{1}^{\Delta},Y_{1}^{\Delta}$ defined as above.

Now we have,
\begin{align*}
\frac{1}{\Delta} I(X_{1}^{\Delta}; Y_{1}^{\Delta}) &= \frac{1}{\Delta} (H(Y_{1}^{\Delta}) - (\Delta) h(p))  \\
& =  \frac{1}{\Delta} H(Y_{1}^{\Delta}) - h(p).
\end{align*}
We can calculate $\Pr(Y_{1}^{\Delta}=y_{1}^{\Delta})$ when $\cA$ is a linear code. Suppose $E(u,v)$ be the number of subsets of $\{1, 2, \dots, r+\rho-1\}$ of size $u$, such that the generator matrix of $\cA$ restricted to only those subsets have rank $v$. 

Let within a repair group, the set of erasures induced by the BEC is $S \subseteq \{1, 2, \dots, \Delta\}$. Let $\cA_{\bar{S}}\subseteq \{0,1\}^{\Delta - |S|}$ denote the code $\cA$ restricted to only the coordinates of $\bar{S} \equiv \{1, 2, \dots, \Delta\} \setminus S$. Let $k_{\bar{S}}$ denote the dimension of  $\cA_{\bar{S}}$ or the rank of the generator matrix of $\cA$ restricted to $ \{1, 2, \dots, \Delta\} \setminus S$.

Let $y_{1}^{\Delta} \in \{0,1, ?\}^\Delta$ be a binary vector with coordinates $S \subseteq \{1, 2, \dots, \Delta\}$ erased (denoted by $?$).
\begin{align*}
\Pr(Y_{1}^{\Delta}=y_{1}^{\Delta}) =\begin{cases} p^{|S|}(1-p)^{\Delta - |S|} \frac{1}{|\cA_{\bar{S}}|}  & y_{\bar{S}} \in \cA_{\bar{S}}\\
0 & \text{ otherwise }
\end{cases}
\end{align*}

Therefore, 
\begin{align*}
H(Y_{1}^{\Delta}) & = - \sum_{S \subseteq \{1, \dots, \Delta\}} 2^{k_{\bar{S}}} p^{|S|}(1-p)^{\Delta - |S|} \frac{1}{2^{k_{\bar{S}}}} \log \Big( p^{|S|}(1-p)^{\Delta - |S|} \frac{1}{2^{k_{\bar{S}}}} \Big) \\
& = - \sum_{u, v} E(u,v)  p^{\Delta -u}(1-p)^{u} \log \Big( p^{\Delta -u}(1-p)^{u} \frac{1}{2^{v}}\Big) \\
& = - \sum_{u, v} E(u,v)  p^{\Delta -u}(1-p)^{u}   \Big( (\Delta-u) \log p +u \log (1-p) -v \Big) \\
& = - \sum_u \binom{\Delta}{u} p^{\Delta -u}(1-p)^{u}  \Big( (\Delta-u) \log p +u \log (1-p) \Big) \\
& \quad + \sum_{u, v} vE(u,v)  p^{\Delta -u}(1-p)^{u}  \\
& = - \Delta \log p + (1-p)\Delta  \log \frac{p}{1-p} + \sum_{u, v} vE(u,v)  p^{\Delta -u}(1-p)^{u}.
\end{align*}

Therefore, 
\begin{align*}
\frac{1}{\Delta} I(X_{1}^{\Delta}; Y_{1}^{\Delta}) &= h(p) + \frac{1}{\Delta}  \sum_{u, v} vE(u,v)  p^{\Delta -u}(1-p)^{u} -h(p) \\
&=  \frac{1}{\Delta}  \sum_{u, v} vE(u,v)  p^{\Delta -u}(1-p)^{u} \\
&= \frac{p^\Delta}{\Delta}\left. \frac{\partial}{\partial y} U\Big(\frac{1-p}{p},y\Big) \right|_{y=1} \\
&= \frac{p^{r+\rho-1}}{r+\rho-1}\left. \frac{\partial}{\partial y} U\Big(\frac{1-p}{p},y\Big) \right|_{y=1},
\end{align*}
where $U(x,y)$ is the rank polynomial of the code $\cA$.

\end{proof}
\paragraph{Example.}
While we have calculated the rank polynomial of the single parity-check code earlier, for more general codes it is difficult. However, for optimal distance $3$ codes, i.e., Hamming codes, it is possible to derive. Even that is quite cumbersome exercise, however here we outline the method.

The rank polynomials of a code and its dual code are related by a MacWilliams-type identity, see~\cite{barg2002some}. If $\cC\in \{0,1\}^n$ and $\cC^\perp\in \{0,1\}^n$ are dual codes of each other with dimensions $k$ and $n-k$, and  rank polynomials $U(x,y)$ and $U^\perp(x,y)$, respectively, then
$$
U^\perp(x,y) = x^n y^{n-k} U\Big(\frac{1}{xy},y\Big).
$$
The dual of Hamming code is Simplex code, for which the number of subsets of columns of generator matrix with given rank has been calculated in \cite{barg1997matroid} (Barg attributed the result to Laksov, 1965~\cite{laksov1965linear}). Plugging them in, and using the identity above, one obtains the rank polynomial for Hamming codes. Then using Theorem~\ref{thm:main3}, one can bound the generalized LRC capacity of BEC from below. We refrain from reproducing the long expressions here.

Instead it turns out that one can indeed find an expression to bound $\capa_{\rm BEC}(p, r, \rho)$ by using the concept of more capable channel, which is more amenable to analysis. We will see this in the subsequent sections. Furthermore, it turns out that the  methods of this section extends to other channels. This is what we attempt in the immediate next section.


\section{LRC Capacity of the Binary Symmetric Channel}\label{sec:bsc}
Recall that, for a binary symmetric channel with error probability $p$, the Shannon capacity is $1-h(p)$. Recall also that when we are constrained to use a locally recoverable code with locality parameters $r$ and $\rho$ as the code, the capacity is $\capa_{\rm BSC}(p,r,\rho)$. For clarity, we first present the usual case of $\rho=2$, followed by the results for general $\rho$, though the first is just a corollary of the later.
\begin{theorem}\label{thm:bsc}
The capacity of LRC with locality  $r$ over BSC($p$) follows:
$$
1 - h(p) - \frac1{r+1}\Big(1-h\Big(\frac{1-(1-2p)^{r+1}}{2}\Big)\Big) \le \capa_{\rm BSC}(p,r,2) \le 1- h(p) -\frac{(1-h(p))^{r+1}}{r+1}.
$$
\end{theorem}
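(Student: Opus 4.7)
The statement combines an upper bound (converse) and a lower bound (achievability) on $\capa_{\rm BSC}(p,r)$, and I would prove them by different routes: the converse by a channel-comparison reduction to the BEC result of Section~\ref{sec:bec}, and the achievability by the same grouped random-coding idea used in the BEC achievability proof, now evaluated for the BSC super-symbol channel.

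For the achievability I would repeat the block construction: partition $[n]$ into $n/(r+1)$ groups of size $r+1$, and within each block draw the codeword uniformly from the $(r+1,r)$ single-parity-check code. This automatically enforces locality $r$ and turns BSC$(p)$ into an i.i.d.\ super-symbol channel $X_1^{r+1}\to Y_1^{r+1}$, so joint-typicality decoding achieves any rate below $\tfrac{1}{r+1}I(X_1^{r+1};Y_1^{r+1})=\tfrac{1}{r+1}H(Y_1^{r+1})-h(p)$. Everything reduces to computing $H(Y_1^{r+1})$. The key structural observation is that $X_1^{r+1}$ is uniform on the $r$-dimensional even-weight subspace $\cC_0\subset\{0,1\}^{r+1}$, so $Y_1^{r+1}=X_1^{r+1}\oplus Z_1^{r+1}$ is uniform on each of the two cosets of $\cC_0$, with coset mass equal to $\Pr(Z_1^{r+1}\in\text{coset})$. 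The even-parity coset has mass $(1+(1-2p)^{r+1})/2$ via the identity $\expect\prod_{i=1}^{r+1}(1-2Z_i)=(1-2p)^{r+1}$; writing $\alpha:=\tfrac{1-(1-2p)^{r+1}}{2}$ then gives $H(Y_1^{r+1})=r+h(\alpha)$, and substitution yields exactly the claimed lower bound.

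For the converse I would invoke the K\"orner--Marton comparison \cite{korner1977comparison}: among binary-input memoryless symmetric channels of a given Shannon capacity $C$, the BEC of erasure probability $1-C$ is the most capable (in fact less noisy), so $I(X^n;Y^n_W)\le I(X^n;Y^n_{\text{BEC}(1-C)})$ for every input distribution $X^n$. Specializing to $W=\text{BSC}(p)$ of capacity $1-h(p)$ selects the comparison channel $\text{BEC}(h(p))$. For any LRC code of rate $R$ with vanishing error over BSC$(p)$, Fano gives $R\le(1/n)I(X^n;Y^n_{\text{BSC}(p)})+o(1)$, which by the inequality above is at most $(1/n)I(X^n;Y^n_{\text{BEC}(h(p))})+o(1)$, so the same code supports rate $R$ on BEC$(h(p))$. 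Taking the sup over LRC codes gives $\capa_{\rm BSC}(p,r)\le\capa_{\rm BEC}(h(p),r)$, and by the BEC formula of Section~\ref{sec:bec} this equals $1-h(p)-\frac{(1-h(p))^{r+1}}{r+1}$, as claimed.

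The achievability is clean once one notices the crucial fact that uniform-on-$\cC_0$ input makes the block-level output $Y_1^{r+1}$ uniform on each parity coset; without this linear structure the entropy computation would be considerably messier. The harder step is the converse: unlike the BEC case, where one could locate "fully recoverable" coordinates from the erasure pattern and mechanically bound $H(Y_1^n\mid I)$, BSC offers no such side information, so the K\"orner--Marton comparison is doing the real work, and the delicate point is that the single-letter domination must tensorize to the arbitrary code-induced distribution $X^n$ --- which it does because both channels are memoryless and the K\"orner--Marton ordering is in fact "less noisy", hence tensor-stable.
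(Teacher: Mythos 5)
Your proposal is correct and follows essentially the same route as the paper: the achievability is the identical single-parity-check block construction with the same super-symbol mutual-information computation (your coset bookkeeping giving $H(Y_1^{r+1})=r+h(\alpha)$ is just a tidier way to organize the paper's direct evaluation of $\Pr(Y_1^{r+1}=y_1^{r+1})$), and the converse is the same reduction to the BEC of matching capacity via channel comparison, combined with the BEC converse of Section~\ref{sec:bec}. One caveat on the converse: the extremality statement that is actually known (and that the paper invokes) for binary-input symmetric channels of a given capacity is the \emph{more capable} ordering, not \emph{less noisy}, so you should not justify tensorization by appealing to less-noisy stability; fortunately ``more capable'' also tensorizes for memoryless channels (by the standard hybrid argument with $U_i=(Y^{i-1},Z_{i+1}^n)$), and the paper packages the needed consequence as Proposition~\ref{prop:cap} (subcode extraction, Csisz\'ar--K\"orner ex.~16, p.~116), which is what licenses your phrase ``the same code supports rate $R$ on BEC$(h(p))$'' --- a large $I(X^n;Y^n_{\mathrm{BEC}})$ under the codebook-uniform input does not by itself give decodability over the BEC. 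Alternatively, you can bypass decodability entirely by noting that the paper's BEC converse in fact bounds $I(X^n;Y^n_{\mathrm{BEC}})\le n(1-h(p))-\tfrac{n(1-h(p))^{r+1}}{r+1}$ for \emph{any} input supported on an LRC, so your chain of inequalities closes directly.
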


The converse result for the general $\rho$ is given below. The upper bound for Theorem~\ref{thm:bsc} follows as a corollary.
\begin{theorem}\label{thm:bsc2}
Recall that the least possible redundancy of a code of length $n$ and distance $d$ is $\mu(n,d)$. It follows that,
$$
\capa_{\rm BSC}(p, r,\rho) \le  1- h(p) -\frac{(1-h(p))^{r}}{r+\rho-1} \sum_{t=0}^{\mu(r+\rho-1, \rho)-1}   (\mu(r+\rho-1, \rho)-t)  \binom{r+\rho-1}{t} h(p)^t (1-h(p))^{\rho-1 -t}. 
$$
\end{theorem}
Since $\mu(r+1, 2) =1$, substituting above we obtain the upper bound of Theorem~\ref{thm:bsc}. The proof of Theorem \ref{thm:bsc2} follows from the more general results about binary-input symmetric discrete memoryless channels. We postpone the proof till next section.

\paragraph{Example.} Since from the sphere-packing bound $\mu(r+2, 3) \ge \log_2(r+3)$, we find,
\begin{equation}\label{eq:ham}
\capa_{\rm BSC}(p, 5,3) \le 1-h(p) - \frac{(1-h(p))^5}{7}(3+8h(p)+10h(p)^2).
\end{equation}
 
Now, we provide the general achievability result.
\begin{theorem}\label{thm:bsc3}
Suppose there exists a linear code $\cA$ of length $\Delta \equiv r+\rho-1$ and minimum distance $\rho$. Let $A^{(i)}(x,y)$ be the $i$th coset weight enumerator polynomial of $\cA$, $i=0, 1,2, \ldots, \frac{2^{\Delta}}{|\cA|}-1$. Then 
$$
\capa_{\rm BSC}(p,r, \rho) \ge \frac{\log|\cA| + H\Big( \{A^{(i)}(1-p,p)\}_{i=0}^{\frac{2^{\Delta}}{|\cA|}-1} \Big)}{\Delta} - h(p).
$$
where, $H(\{p_i\}_{i=1}^\ell) \equiv - \sum_{i=1}^\ell p_i\log_2 p_i$ is the entropy function.
\end{theorem}

We will prove this result next.

\subsection{Achievability: Proof of Theorem~\ref{thm:bsc3}}
We first prove the achievability part of Theorem~\ref{thm:bsc} which will explain the intuition better. Restating the claim below:
\begin{proposition}
There exists a family of $(n, Rn, r)$ LRC codes with rate 
$$
R \ge 1 - h(p) - \frac1{r+1}\Big(1-h\Big(\frac{1-(1-2p)^{r+1}}{2}\Big)\Big),
$$
that when used over a BSC($p$) results in a probability of error that goes to $0$ with $n$.
\end{proposition}
\begin{proof}
We will show the above by constructing a code. Again, partition the set of $n$ coordinates into $\frac{n}{r+1}$ groups of size $r+1$ each. As earlier, we can simply neglect the remainder coordinates if $r+1 \nmid n$. Now, consider the $r+1$ bits of a group as a super-symbol. Consider the input-output channel induced by these super-symbols instead of the BSC. We find the capacity of this channel.

Let us choose the codewords in the following way. Within each group $r$ symbols are uniformly and independently (Bernoulli($1/2$)) chosen. The last symbol of each group is the modulo-2 sum of the other $r$ symbols. The rate of this code such that the probability of error being vanishing is given by
$$
\frac{1}{r+1} I(X_{1}^{r+1}; Y_{1}^{r+1}),
$$
where $X_{1}^{r+1}, Y_{1}^{r+1}$ represents the $r+1$-bit input and output. Note that we arrive at this rate by considering the group of $r+1$ bits as a supersymbol from an alphabet of size $2^r$, and using a joint-typicality decoder. Now we have,
\begin{align*}
\frac{1}{r+1} I(X_{1}^{r+1}; Y_{1}^{r+1}) &= \frac{1}{r+1} (H(Y_{1}^{r+1}) - (r+1) h(p)) \\
& =  \frac{1}{r+1} H(Y_{1}^{r+1}) - h(p).
\end{align*}
We can now calculate $\Pr(Y_{1}^{r+1}=y_{1}^{r+1}).$

\begin{align*}
\Pr(Y_{1}^{r+1}&=y_{1}^{r+1}) =  \sum_{x_{1}^{r+1}}\Pr(Y_{1}^{r+1}=y_{1}^{r+1} |X_{1}^{r+1}=x_{1}^{r+1} ) \Pr(X_{1}^{r+1}=x_{1}^{r+1})\\
& = \frac{1}{2^r} \sum_{x_{1}^{r+1}: \wt(x_{1}^{r+1}) \text{ is even }}\Pr(Y_{1}^{r+1}=y_{1}^{r+1} |X_{1}^{r+1}=x_{1}^{r+1} )\\
& = \frac{1}{2^r} \sum_{x_{1}^{r+1}: \wt(x_{1}^{r+1}) \text{ is even }} p^{\dist(x_{1}^{r+1}, y_{1}^{r+1})} (1-p)^{r+1 - \dist(x_{1}^{r+1}, y_{1}^{r+1})}\\
& = \begin{cases} 
\frac{1}{2^r}\sum_{w \text{ even }} \binom{r+1}{w} p^w (1-p)^{r+1-w}, \quad \text{ when } \wt(y_1^{r+1}) \text{ even }\\
\frac{1}{2^r}\sum_{w \text{ odd }} \binom{r+1}{w} p^w (1-p)^{r+1-w}, \quad \text{ when } \wt(y_1^{r+1}) \text{ odd }
\end{cases}\\
&= \begin{cases} 
\frac{1}{2^{r+1}}(1+(1-2p)^{r+1}), \quad \text{ when } \wt(y_1^{r+1}) \text{ even }\\
\frac{1}{2^{r+1}}(1-(1-2p)^{r+1}), \quad \text{ when } \wt(y_1^{r+1}) \text{ odd }
\end{cases}
\end{align*}
Therefore, 
\begin{align*}
H(Y_1^{r+1}) &= -\frac{2^r}{2^{r+1}}  (1+(1-2p)^{r+1}) \log \frac{(1+(1-2p)^{r+1})}{2^{r+1}}\\
& \qquad - \frac{2^r}{2^{r+1}}  (1-(1-2p)^{r+1}) \log \frac{(1-(1-2p)^{r+1})}{2^{r+1}}.
\end{align*}
After some simplifications, we have
$$
\frac{1}{r+1} I(X_{1}^{r+1}; Y_{1}^{r+1}) = 1-h(p) - \frac1{r+1}\Big(1-h\Big(\frac{1-(1-2p)^{r+1}}{2}\Big)\Big).
$$
\end{proof}


To extend the achievability result for general LRCs with $\rho >2$, we need to  ensure that the codewords restricted to each repair group  form a code with minimum distance $\rho$. Therefore it makes sense to choose random codewords of a code of distance $\rho$ as disjoint repair blocks to form the overall LRC. For this we need to figure out $H(Y_{1}^{r+\rho-1})$ where $Y_{1}^{r+\rho-1}$ is the output of a BEC where the input $X_{1}^{r+\rho-1}$ is a randomly chosen codeword of a fixed code $\cA$ of distance $\rho$. 
If $\cA$ is a linear code and the channel is BSC, then the entropy of the output of the channel can be computed if we know the {\em coset weight distribution} of the code.

\begin{proof}[Proof of Theorem~\ref{thm:bsc3}]
To construct a code with $(\rho,r)$ locality we first choose a fixed linear code $\cA$ of length $\Delta \equiv r+\rho-1$ and distance $\rho$. Next we construct a random code $\cC$ of length $n$. A codeword $c = (c_1 | c_2 | \ldots |c_{\frac{n}{\Delta}})$ of $\cC$ is formed by concatenating $\frac{n}{\Delta}$ randomly and uniformly chosen codewords of $\cA$ side-by-side. As earlier, we can assume that $n$ is divisible by $\Delta$ without loss of any generality. 
Again, if we use a joint-typicality decoding then the achievable rate of transmission is given by,
$$
\frac{1}{\Delta} I(X_{1}^{\Delta}; Y_{1}^{\Delta}),
$$
where $X_{1}^{\Delta}$ is a randomly and uniformly chosen codeword of $\cA$ and $Y_{1}^{\Delta}$ is the output of a BSC with flip probability $p$ when the input to the BSC is  $X_{1}^{\Delta}$.
Now we have,
\begin{align*}
\frac{1}{\Delta} I(X_{1}^{\Delta}; Y_{1}^{\Delta}) &= \frac{1}{\Delta} (H(Y_{1}^{\Delta}) - (\Delta) h(p)) \\
& =  \frac{1}{\Delta} H(Y_{1}^{\Delta}) - h(p).
\end{align*}
We can calculate $\Pr(Y_{1}^{\Delta}=y_{1}^{\Delta})$ when $\cA$ is a linear code.

\begin{align*}
\Pr(Y_{1}^{\Delta}=y_{1}^{\Delta}) &=  \sum_{x_{1}^{\Delta}}\Pr(Y_{1}^{\Delta}=y_{1}^{\Delta} |X_{1}^{\Delta}=x_{1}^{\Delta} ) \Pr(X_{1}^{\Delta}=x_{1}^{\Delta})\\
&= \frac{1}{|\cA|} \sum_{x_{1}^{\Delta} \in \cA}\Pr(Y_{1}^{\Delta}=y_{1}^{\Delta} |X_{1}^{\Delta}=x_{1}^{\Delta} )\\
& = \frac{1}{|\cA|} \sum_{x_{1}^{\Delta} \in \cA} p^{\dist(x_{1}^{\Delta},y_{1}^{\Delta})} (1-p)^{\Delta - \dist(x_{1}^{\Delta},y_{1}^{\Delta})}\\
& = \frac{1}{|\cA|} \sum_{w=0}^{\Delta} A^{(i)}_w p^w(1-p)^{\Delta-w},
\end{align*}
if $y_{1}^{\Delta}$ belongs to the $i$th coset of the code, where $A^{(i)}_w$ is the number of vectors of Hamming weight $w$ in the $i$th coset of the code $\cA$, $i=0,1,  2, \ldots, \frac{2^{\Delta}}{|\cA|}-1$. Recall the {\em coset weight enumerator} of the code is given by:
$$
A^{(i)}(x,y) = \sum_{w=0}^{\Delta} A^{(i)}_w  x^{\Delta-w} y^w.
$$
Then,
\begin{align*}
H(Y_{1}^{\Delta})& = - \sum_{i=0}^{\frac{2^{\Delta}}{|\cA|}-1} |\cA| \cdot \frac{1}{|\cA|} A^{(i)}(1-p,p) \log \frac{A^{(i)}(1-p,p)}{|\cA|}\\
& = - \sum_{i=0}^{\frac{2^{\Delta}}{|\cA|}-1}A^{(i)}(1-p,p) \log A^{(i)}(1-p,p) + \log |\cA| \sum_{i=0}^{\frac{2^{\Delta}}{|\cA|}-1}A^{(i)}(1-p,p) .
\end{align*}
Now,
\begin{align*}
\sum_{i=0}^{\frac{2^{\Delta}}{|\cA|}-1}A^{(i)}(1-p,p) &= \sum_{i=0}^{\frac{2^{\Delta}}{|\cA|}-1} \sum_{w=0}^{\Delta} A^{(i)}_w p^w(1-p)^{\Delta-w}\\
& = \sum_{w=0}^{\Delta} \Big(\sum_{i=0}^{\frac{2^{\Delta}}{|\cA|}-1} A^{(i)}_w \Big) p^w(1-p)^{\Delta-w}\\
& = \sum_{w=0}^{\Delta} \binom{\Delta}{w} p^w(1-p)^{\Delta-w} =1.
\end{align*}
Therefore, 
\begin{align*}
H(Y_{1}^{\Delta}) = \log|\cA| + H\Big( \{A^{(i)}(1-p,p)\}_{i=0}^{\frac{2^{\Delta}}{|\cA|}-1} \Big),
\end{align*}
where $H(\{p_i\}_{i=1}^\ell) \equiv - \sum_{i=1}^\ell p_i\log_2 p_i$.
Overall,
$$
\frac{1}{\Delta} I(X_{1}^{\Delta}; Y_{1}^{\Delta}) = \frac{\log|\cA| + H\Big( \{A^{(i)}(1-p,p)\}_{i=0}^{\frac{2^{\Delta}}{|\cA|}-1} \Big)}{\Delta} - h(p).
$$
\end{proof}
\paragraph{Example: Hamming code as local codes.}
By taking the code $\cA$ to be the Hamming code of length $\Delta=r+2$, we can therefore have the following result for $\rho =3$, as the coset-weight distribution of Hamming code is known~\cite{MS1977}:
\begin{align*}
\capa_{\rm BSC}(p,r, 3) &\ge  1-h(p) - \frac{1}{r+3}(1-(1-2p)^{\frac{r+3}{2}}) \log(1-(1-2p)^{\frac{r+3}{2}}) \\
&- \frac{1+(r+2)(1-2p)^{\frac{r+3}{2}}}{(r+2)(r+3)} \log(1+(r+2)(1-2p)^{\frac{r+3}{2}}).
\end{align*}
In particular, 
\begin{align}
\capa_{\rm BSC}(p,5, 3) &\ge 1-h(p) - \frac{1}{8}(1-(1-2p)^{4}) \log(1-(1-2p)^{4})\nonumber \\
&- \frac{1+7(1-2p)^{4}}{56} \log(1+7(1-2p)^{4}).
\end{align}
One can compare this with the converse bound of Eq.~\ref{eq:ham}. Note that the bounds coincide both when $p \to 0$ and $p \to \frac12$.

As will discussed in detail in the next section, this automatically gives a lower bound on $\capa_{\rm BEC}(p,r,3)$ since BEC is a more capable channel. 
\begin{align*}
\capa_{\rm BEC}(p,r,3) &\ge  1-p - \frac{1}{r+3}(1-(1-2h^{-1}(p))^{\frac{r+3}{2}}) \log(1-(1-2h^{-1}(p))^{\frac{r+3}{2}}) \\
&- \frac{1+(r+2)(1-2h^{-1}(p))^{\frac{r+2}{2}}}{(r+2)(r+3)} \log(1+(r+2)(1-2h^{-1}(p))^{\frac{r+3}{2}}).
\end{align*}
At $p=0$ this bound evaluates to $\capa_{\rm BEC}(p=0,r, 3) \ge 1- \frac{\log(r+3)}{r+2}$. Note that, from the upper bound we have,
$\capa_{\rm BEC}(p=0,\rho=3,r) \le 1- \frac{\log(r+3)}{r+2}$. Therefore the bounds are  tight  at $p=0$. Similar tightness can be observed as $p \to 1$.

\section{General binary input-symmetric channels}\label{sec:all}

The results for general binary input-symmetric channels follow from the converse and achievability results for BEC or BSC because in some sense these channels are the best and worst among the general cases respectively. In fact, the converse for BSC (Theorem~\ref{thm:bsc2}) also follows from this reasoning. To formalize this, we will use the notion of {\em more capable channel}. 
Since we have an impossibility (converse) result for BEC and an achievability result for BSC, using Prop.~\ref{prop:cap}, we can obtain the following result.
\begin{theorem}
For any binary-input symmetric discrete memoryless channel $W$,
$$
\capa(W) - \frac1{r+1}\Big(1-h\Big(\frac{1-(1-2h^{-1}(1-\capa(W)))^{r+1}}{2}\Big)\Big) \le \capa(W,r,2) \le \capa(W) -\frac{\capa(W)^{r+1}}{r+1}.
$$
\end{theorem}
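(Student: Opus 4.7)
The plan is to reduce the general binary-input symmetric case to the BEC and BSC cases already handled in Sections \ref{sec:bec} and \ref{sec:bsc}, using the extremal ``more capable'' ordering cited just above the theorem. The key observation is that one must compare $W$ not to arbitrary BEC/BSC channels, but to the BEC and BSC that share the \emph{same Shannon capacity} as $W$. Concretely, let $p^\ast = 1 - \capa(W)$, so that $\capa_{\rm BEC}(p^\ast) = \capa(W)$, and let $q^\ast = h^{-1}(1-\capa(W))$, so that $\capa_{\rm BSC}(q^\ast) = \capa(W)$. The invoked fact from \cite{geng2013broadcast,korner1977comparison} is that among all binary-input symmetric DMCs of a given capacity, BEC is the most capable and BSC is the least capable; hence BEC$(p^\ast)$ is more capable than $W$, and $W$ is more capable than BSC$(q^\ast)$.

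For the upper bound, I would argue as follows. Suppose an LRC $\cC$ of locality $r$ and rate $R$ achieves vanishing probability of error over $W$. Since BEC$(p^\ast)$ is more capable than $W$, Proposition~\ref{prop:cap} yields a subcode $\cC'\subseteq \cC$ of rate $R-\delta$ achieving vanishing error over BEC$(p^\ast)$. The subcode $\cC'$ is an $(n,(R-\delta)n,r)$ LRC (the locality property is inherited by any subset of codewords, since the recovery relations \eqref{eq:def1} hold pointwise on every codeword). By the BEC converse,
\[
R-\delta \le 1 - p^\ast - \frac{(1-p^\ast)^{r+1}}{r+1} = \capa(W) - \frac{\capa(W)^{r+1}}{r+1}.
\]
Letting $\delta\to 0$ gives $\capa(W,r)\le \capa(W) - \capa(W)^{r+1}/(r+1)$.

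For the lower bound, I would start from the BSC achievability in Section \ref{sec:bsc}: there is a family of LRCs of locality $r$ and rate $1 - h(q^\ast) - \tfrac{1}{r+1}\bigl(1-h\bigl(\tfrac{1-(1-2q^\ast)^{r+1}}{2}\bigr)\bigr)$ achieving vanishing error over BSC$(q^\ast)$. Since $W$ is more capable than BSC$(q^\ast)$, Proposition~\ref{prop:cap} gives, for every $\delta>0$, a subcode of the same locality and rate only $\delta$ smaller achieving vanishing error over $W$. Substituting $q^\ast = h^{-1}(1-\capa(W))$ yields the claimed lower bound.

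The one step that needs care is the inheritance of the locality property under passing to a subcode in Proposition~\ref{prop:cap}; I expect this to be essentially automatic because Definition \ref{def1} imposes a coordinatewise functional relation that is preserved when we restrict to any subset of $\cC$, with the same repair groups $\cR_i$ and recovery functions $\phi_i$. Once that is noted, the two directions are symmetric applications of the ordering, and the algebraic substitutions $p^\ast = 1-\capa(W)$ and $q^\ast = h^{-1}(1-\capa(W))$ directly produce the stated bounds; no further computation beyond what was done for BEC and BSC is required.
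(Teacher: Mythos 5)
Your proposal is correct and follows essentially the same route as the paper: both directions reduce to the capacity-matched BEC (converse) and BSC (achievability) via the more-capable ordering and Proposition~\ref{prop:cap}. Your explicit remark that the locality property is inherited by subcodes is a detail the paper leaves implicit, but it does not change the argument.
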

\begin{proof}
For a channel $W$, suppose $\capa(W)=1-p$. Therefore, a BEC with erasure probability $p$ must be more capable than the channel $W$. 
There exists an LRC of rate $\capa(W,r,2)$ that achieves a vanishing probability of error over the channel $W$. Therefore, there exists an LRC of rate $\capa(W,r,2)$ that achieves a vanishing probability of error over the BEC of erasure probability $p$. This implies,
$$
\capa(W,r,2) \le 1-p -\frac{(1-p)^{r+1}}{r+1},
$$
which proves the upper bound.

On the other hand, suppose $\capa(W)=1-h(p').$ Therefore, a BSC with flip probability $p'$ must be less capable than the channel $W$. We know that there exists a code of rate 
$$
1-h(p') - \frac1{r+1}\Big(1-h\Big(\frac{1-(1-2p')^{r+1}}{2}\Big)\Big),
$$
that achieves a vanishing probability of error over the BSC with error probability $p'$. Therefore there must exist a code of same rate that achieves a vanishing probability of error over the channel $W$.
\end{proof}

Since we have upper and lower bounds for LRCs over BEC and BSC respectively for the general case of $\rho >2$, we can obtain bounds for general binary input discrete memoryless channels via similar argument. We refrain from writing those somewhat clumsy expressions here.


%
%

\section{Conclusion}
We have characterized the error-correcting capabilities of optimal locally recoverable codes when used in a setting of stochastic errors and erasures.
There are some compelling open problems left to study regarding capacity of LRCs. First of all, for a BSC, the gap to capacity is not exactly characterized for even $\rho=2$. We conjecture that the upper bound on the gap (see Table~\ref{tab:results}) is tight. 

It should be noted that LRCs have been generalized to facilitate multiple (disjoint) repair groups for each candidate~e.g.\cite{cai2019optimal,karingula2022lower,tamo2016bounds}. It will be of interest to see how the capacity scale with that requirement.

Finally, while we do not foresee an obstacle to extend the results for larger alphabets, it would be good to have them documented. 

\vspace{0.5in}

\noindent{\em Acknowledgement:} 
The author is grateful to Alexander Barg (for discussions on the rank polynomial), and Hamed Hassani and Chandra Nair (for discussions on the `more capable' channels).

\bibliographystyle{abbrv}
\bibliography{aryabib}
\end{document}